\theoremstyle{plain}
\newtheorem{thm}{Theorem}[section]
\newtheorem{prop}[thm]{Proposition}
\theoremstyle{definition}
\theoremstyle{remark}
\newtheorem{remark}[thm]{Remark}
\begin{document}

\title{A class of non-geometric M-theory compactification backgrounds} 

\author{C. S. Shahbazi}
\email{carlos.shabazi-alonso@cea.fr}
\address{Institut de Physique Th\'eorique, CEA-Saclay, France.}

\date{\today}

\begin{abstract}
We study a particular class of supersymmetric M-theory eight-dimensional non-geometric compactification backgrounds to three-dimensional Minkowski space-time, proving that the global space of the non-geometric compactification is still a differentiable manifold, although with very different geometric and topological properties respect to the corresponding standard M-theory compactification background: it is a compact complex manifold admitting a K\"ahler covering with deck transformations acting by holomorphic homotheties with respect to the K\"ahler metric. We show that this class of non-geometric compactifications evade the Maldacena-Nu\~nez no-go theorem by means of a mechanism originally developed by Mario Garc\'ia-Fern\'andez and the author for Heterotic Supergravity, and thus do not require $l_{P}$-corrections to allow for a non-trivial warp factor or four-form flux. We obtain an explicit compactification background on a complex Hopf four-fold that solves all the equations of motion of the theory. We also show that this class of non-geometric compactifications is equipped with a holomorphic principal torus fibration over a projective K\"ahler base as well as a codimension-one foliation with nearly-parallel $G_{2}$-leaves, making thus contact with the work of M. Babalic and C. Lazaroiu on the foliation structure of the most general M-theory supersymmetric compactifications.
\end{abstract}


\maketitle

\setcounter{tocdepth}{1} 
\tableofcontents


\section{Introduction}
\label{sec:introduction}


In this note we are going to consider a \emph{simple} class of non-geometric compactification backgrounds, building up on the results of reference \cite{Shahbazi:2015mva}. By non-geometric solution we mean here a global space obtained by patching up local solutions to the equations of motion using local diffeomorphisms, gauge transformations, and global symmetries of the equations of motion, namely U-dualities. The term \emph{non-geometric} may be somewhat misleading, especially for mathematicians interested in the topic, since, although there is of course no guarantee for the global space of a non-geometric solution to be a smooth differentiable manifold, it will be for sure a well-defined mathematical object, with well-defined topological an geometric properties. We will anyway use it as it has become standard in the the literature. 

Non-geometric compactification backgrounds have been intensively studied in the literature from different points of view, see for example \cite{Hellerman:2002ax,Hull:2004in,Shelton:2005cf,Wecht:2007wu,Grana:2008yw,Andriot:2013txa} for more details and further references. In particular, the approach to non-geometry through Generalized Complex Geometry \cite{2002math......9099H,2004math......1221G} and Double Field Theory \cite{Siegel:1993th,Hull:2009mi} has proved to be very efficient \cite{Grana:2008yw,Andriot:2011uh,Andriot:2012an,Blumenhagen:2012nk,Blumenhagen:2012nt,Andriot:2012wx,Blumenhagen:2013aia,Andriot:2014qla,Blumenhagen:2014gva,Hassler:2014sba,Lee:2015xga}, since these frameworks naturally include at least part of the dualities of String Theory as their automorphism group. To mention other approaches to non-geometry relevant for this letter, references \cite{McOrist:2010jw,Malmendier:2014uka,Gu:2014ova,2015arXiv150801193L} consider compactifications that are non-geometric from the Heterotic point of view and that become geometric compactifications via duality with F-theory. References \cite{Martucci:2012jk,Braun:2013yla,Candelas:2014jma,Candelas:2014kma} contain a very interesting approach, named there \emph{G-theory}, along the main idea of this work: among other things, they provide a very detailed construction of non-perturbative Type-II vacua by gluing local solutions to the equations of motion using different types of U-dualities. When performing such a non-trivial global patching, it is usually very difficult to obtain precise results about the topological and geometric properties of the global non-geometric space of the compactification. This is partly due to the fact that the symmetries of the local equations of motion involved in the global patching can be relatively involved. That is why here we will consider the arguably simplest non-geometric global patching of local solutions to the equations of motion of eleven-dimensional Supergravity on a warped compactification background to three-dimensional Minkowski space-time. In exchange for the \emph{simplicity} of the U-dualities involved in the global patching, we will be able to fully characterize the topology and the geometry of the global space.

More precisely, we will consider local solutions, more concretely local warped compactification backgrounds to three-dimensional Minkowski space, to the eleven-dimensional Supergravity equations of motion and we will globally patch them using local diffeomorphisms, gauge transformations and the \emph{trombone symmetry} of the warp factor, which simply consists on rescalings of the warp factor by a real constant. Warped compactifications of M-theory preserving different amounts of supersymmetry have been study in the literature, see for example references \cite{Becker:1996gj,Becker:2000jc,Tsimpis:2005kj,Condeescu:2013hya,Prins:2013wza,Prins:2014ona,Gutowski:2014ova}. Recently, the most general scenario was studied in references \cite{Babalic:2014fua,Babalic:2014dea,Babalic:2015xia,Babalic:2015kka}, where it was shown that a very efficient mathematical tool to study such manifolds is the theory of foliations with a particular $G$-structure on the leaves.

The global symmetry of the equations of motion that we will use to patch the local solutions is arguably the simplest one available. The idea is to consider the simplest non-geometric scenario in order to be able to fully characterize topologically as well as geometrically the global space of the compactification, something of utmost importance in order to properly understand the moduli space of a non-geometric compactification space, as well as its global properties. Hence, we hope that this compactification background will hep to understand the nature of non-geometric compactification spaces, starting from the simplest case. In fact, we will be able to show that the global space of the compactification is a differentiable manifold, but with topological and geometric properties drastically different from the corresponding standard geometric compactification background.  

Let us be more precise. In this letter we will prove, among other things, that:

\begin{itemize}

\item The non-geometric compactification space $M$ is a particular class of compact complex manifolds admitting a K\"ahler covering with deck transformations acting by holomorphic homotheties with respect to the K\"ahler metric. In other words, $M$ is a particular type of locally conformally K\"ahler manifold. Therefore, $M$ admits a K\"aler covering $\tilde{M}$ with K\"ahler form $\tilde{\omega}$, fitting in the following short sequence:

\begin{equation}
\Gamma\to \tilde{M}\to M\, .
\end{equation}

\noindent
The non-geometric warp factor is encoded in the geometry of $M$ in an elegant way. Given a $2d$-dimensional locally conformally K\"ahler manifold $(M,\omega,\theta)$ with K\"ahler form $\omega$ and closed Lee-form $\theta$, let $L$ the trivializable flat line bundle associated to the representation $A\to |\det\, A|^{\frac{1}{d}}, A\in Gl(2d, \mathbb{R}) $, with a flat connection $\nabla_{\theta} \equiv d + \theta$. The line bundle $L$ is usually called the weight bundle of $M$ and its holonomy coincides with the character $\chi\colon \pi_{1}(M)\to \mathbb{R}^{+}$. The image of $\chi$ is called the monodromy group of $M$. The warp factor is given by a flat connection of $L$ which, after choosing a trivialization, is given by a closed one-form on $M$. If $M$ is simply-connected its holonomy is trivial and then $M$ becomes a K\"ahler manifold and the compactification becomes geometric. 

\item The non-geometry of the solution is associated to the global space being non-simply-connected. If we take $M$ to be simply connected, then $M$ becomes a K\"ahler manifold and we obtain a standard geometric solution.

\item We obtain an explicit solution, preserving locally $\mathcal{N}=2$ supersymmetry, on a complex Hopf four-fold that solves all the local equations of motion of the theory, including the equation of motion for the wrap factor. We explicitly write the local metric, flux and warp factor. 

\item The previous solution evades the Maldacena-Nu\~nez theorem, by a mechanism originally developed by Mario Garc\'ia-Fern\'andez and the author for Heterotic Supergravity in reference \cite{twistHeterotic}, and thus there are non-geometric solutions with non-zero warp factor and flux without the need of higher-derivative corrections. 

\item The explicit solution on the complex Hopf four-fold is equipped with a holomorphic elliptic fibration over a K\"ahler base. This points out to a possible application of this backgrounds to F-theory compactifications.

\item The explicit solution on the complex Hopf four-fold admits a codimension-one foliation equipped with a nearly-parallel $G_{2}$-structure on the leaves. Therefore the solution, even being non-geometric, preserves the structure of the most general geometric compactification background of eleven-dimensional Supergravity on an eight-manifold, thoroughly studied in references \cite{Babalic:2014fua,Babalic:2014dea,Babalic:2015xia,Babalic:2015kka}.

\end{itemize}

\noindent
In addition, the moduli space of locally conformally K\"ahler manifolds is usually very restricted, so compactification on this backgrounds may partially evade the moduli-stabilization problem which is present in many String Theory compactifications. 

To summarize, we think that the type of non-geometric background studied in this letter is simple enough to be manageable, in particular it is possible to study its global topological and geometric properties, yet it is an honest non-trivial non-geometric compactification background. Therefore, it might be a good starting point to a systematic rigorous global study of non-geometric String Theory backgrounds. We hope this letter is a first small step in that direction. 

The outline of the note goes as follows. In section \ref{sec:Mtheory3d} we will outline the structure of supersymmetric eleven-dimensional Supergravity solutions, pointing out in a precise way the well-known \emph{issue} of imposing at the same time the classical Killing spinor equations and the $l_{P}$-corrected equations of motion, an issue that is not present in the non-geometric setting since the Maldacena-Nu\~nez no-go theorem does no hold and thus there is no need of considering $l_{P}$-corrections in order to have non-trivial solutions. In section \ref{sec:Nogeometric} we construct the non-geometric compactification and we obtain an explicit solution to all the equations of motion, studying some of its properties.\newline


\noindent{\bf Acknowledgements:} I would like to thank Mario Garc\'ia-Fern\'andez, Mariana Gra\~na, Liviu Ornea, Raffaele Savelli, Alessandro Tomasiello and Victor Vuletescu for very useful discussions and comments. I would like to thank the ICMAT, Luis \'Alvarez-Consul and Mario Garc\'ia-Fern\'andez, for the hospitality during the later stages of this project. This work was supported in part by the ERC Starting Grant 259133 -- ObservableString.


\section{About the consistency of M-theory on eight-manifolds}
\label{sec:Mtheory3d}


In this note we are interested in a particular class of non-geometric M-theory compactification backgrounds to $\mathcal{N}=2$ three-dimensional Minkowski space-time. These type of non-geometric compactifications will be introduced in section \ref{sec:Nogeometric}. In this section we will consider instead standard  M-theory supersymmetric solutions, in order to motivate how the non-geometric version of these solutions may be useful in evading some of the \emph{issues} present in the standard M-theory supersymmetric compactification scenario, such as the Maldacena-Nu\~nez no-go theorem \cite{Maldacena:2000mw}. The effective, low-energy, description of M-theory \cite{Witten:1995ex} is believed to be given by eleven-dimensional $\mathcal{N}=1$ Supergravity \cite{Cremmer:1978km}, which we will formulate on an eleven-dimensional, oriented, spinnable, differentiable manifold\footnote{By differentiable manifold we mean a Hausdorff, second-countable, topological space equipped with a \emph{differentiable structure}.} $M$. We will denote by $S\to M$ the corresponding spinor bundle, which is a bundle of $Cl(1,10)$ Clifford modules. At each point $p\in M$ we thus have that $S_{p}$ is a thirty two real, symplectic, $Cl(1,10)$ Clifford module\footnote{There are two $Cl(1,10)$ Clifford modules, which can be distinguished by the action of the volume form of $Cl(1,10)$.}, with symplectic form that we will denote by $\omega$.

The field content of eleven-dimensional Supergravity is given by a Lorentzian metric $\mathsf{g}$, a closed four-form $\mathsf{G}\in\Omega^{4}_{cl}\left(\mathcal{M}\right)$ and a Majorana gravitino $\Psi\in\Gamma\left(S\otimes \Lambda^{1}\left(\mathcal{M}\right)\right)$. We will focus only on bosonic solutions $(M,\mathsf{g},\mathsf{G})$ of the theory, so we will truncate the gravitino. The classical bosonic equations of motion are given by:

\begin{eqnarray}
\label{eq:eqsmotion}
E_{0} = \mathrm{Ric} -\frac{1}{2}\mathsf{G}\circ\mathsf{G} + \frac{1}{6}\mathsf{g}\left|\mathsf{G}\right|^2 =  0\, ,\qquad F_{0} = d\ast\mathsf{G} + \mathsf{G}\wedge\mathsf{G}  =  0\, ,
\end{eqnarray}

\noindent
where

\begin{equation}
(\mathsf{G}\circ\mathsf{G})(v,v) = |\iota_{v}\mathsf{G}|^2\, , \qquad v \in \mathfrak{X}(M)\, ,
\end{equation}

\noindent
is a symmetric $(2,0)$-tensor. Eleven-dimensional Supergravity supersymmetric bosonic solutions, and in particular supersymmetric compactification backgrounds, are defined as being solutions of eleven-dimensional Supergravity admitting at least one real spinor $\epsilon \in \Gamma(S)$ such that:

\begin{equation}
\label{eq:KSE}
D\epsilon = 0\, ,
\end{equation}

\noindent
where $D$ is the Supergravity connection acting on the bundle of Clifford-modules $S$. It is given by:

\begin{equation}
\label{eq:N1susy11}
D_{v}\epsilon\equiv \nabla_{v}\epsilon + \frac{1}{6} \iota_{v} \mathsf{G}\cdot \epsilon +\frac{1}{12} v^{\flat}\wedge \mathsf{G}\cdot\epsilon\, .
\end{equation} 

\noindent
Here $\nabla$ is the spin connection induced from the Levi-Civita connection on the tangent bundle and $\cdot$ denotes the Clifford action of forms on sections of $S$.

A supersymmetric configuration $(M,\mathsf{g},\mathsf{G})$, namely a manifold admitting a $D$-constant spinor, does not necessarily solves the eleven-dimensional Supergravity equations of motion, but it is in some sense not far from being a solution, since the integrability condition of \eqref{eq:KSE} can be written in terms of the equations of motion of the theory. The integrability condition of \eqref{eq:KSE} can be found to be:

\begin{equation}
\qquad \iota_{v}\, E_{0}\cdot \epsilon - \frac{1}{6\cdot 3!} v^{\flat}\wedge (\ast F_{0})\cdot\epsilon + \frac{1}{3!} \iota_{v}(\ast F_{0})\cdot\epsilon  = 0\, ,
\end{equation}

\noindent
where $E_{0}$ denotes the Einstein equation and $F_{0}$ denotes the Maxwell equation of eleven-dimensional Supergravity, see equations \eqref{eq:eqsmotion}. Supersymmetric solutions of eleven-dimensional Supergravity can be divided in two classes, the time-like class and the null class, see references \cite{Gauntlett:2002fz,Gauntlett:2003wb}, where the full classification of supersymmetric solutions of eleven-dimensional Supergravity was first obtained. The time-like class is given by the supersymmetric solutions that satisfy:

\begin{equation}
g(\xi^{\flat},\xi^{\flat}) > 0\, , \qquad \xi(v) = \omega(\epsilon, v\cdot\epsilon)\, ,
\end{equation}

\noindent
where $\xi$ is the one-form associated to $\epsilon$. Null supersymmetric solutions on the other hand, are those that satisfy $g(v,v) = 0$. For time-like configurations, it can be shown that if the Maxwell equation is satisfied, then the Einstein equations follow from the integrability condition of the Killing spinor equation \eqref{eq:KSE}, see reference \cite{Gauntlett:2002fz}. In other words, the Einstein equations follow from time-like supersymmetry together with the Maxwell equations of motion. Hence, as it is well known in the literature, supersymmetry is closely related to the equations of motion but it does no \emph{always} imply them. Notice that supersymmetric compactification backgrounds are indeed time-like supersymmetric solutions of eleven-dimensional Supergravity.

Compactification backgrounds of eleven-dimensional Supergravity are subject to the Maldacena-Nu\~nez no-go theorem \cite{Maldacena:2000mw}, which we state here for completeness, applied to eleven-dimensional Supergravity.

\begin{thm}
Every warped compactification of eleven-dimensional Supergravity on a closed manifoldn has constant warp factor and zero four-form flux $G$.
\end{thm}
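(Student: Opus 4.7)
I would follow the classical Maldacena--Nu\~nez integration argument, adapted to the equations of motion \eqref{eq:eqsmotion}. The starting point is the warped ansatz
\begin{equation*}
\mathsf{g} \;=\; e^{2A(y)}\,\eta_{3} \,\oplus\, g_{N},
\end{equation*}
where $(N,g_{N})$ is the closed internal eight-manifold and $A\in C^{\infty}(N)$ is the warp factor. Three-dimensional Poincar\'e invariance, combined with the closedness $d\mathsf{G}=0$, forces the decomposition
\begin{equation*}
\mathsf{G} \;=\; \mathrm{vol}_{\eta_{3}}\wedge df \,+\, G_{4},
\end{equation*}
with $f\in C^{\infty}(N)$ and $G_{4}\in\Omega^{4}(N)$ closed, where $\mathrm{vol}_{\eta_{3}}$ denotes the unwarped external volume form.

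Next I would extract a scalar equation from the external-external block of the Einstein equation $E_{0}=0$. The standard warped-product Ricci formulas together with $\mathrm{Ric}(\eta_{3})=0$ give
\begin{equation*}
\mathrm{Ric}_{\mu\nu} \;=\; -\,e^{2A}\eta_{\mu\nu}\bigl(\Delta_{N}A + 3|dA|_{g_{N}}^{2}\bigr),
\end{equation*}
while substituting the above decomposition of $\mathsf{G}$ into the bilinears $\mathsf{G}\circ\mathsf{G}$ and $|\mathsf{G}|^{2}$ produces a scalar identity of the form
\begin{equation*}
\Delta_{N}A + 3|dA|^{2} \;=\; c_{1}\,e^{-6A}|df|^{2} + c_{2}\,|G_{4}|^{2},
\end{equation*}
with \emph{positive} constants $c_{1},c_{2}$; the factor $e^{-6A}$ arises from raising the three external legs of the electric part of $\mathsf{G}$.

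The decisive step is to recognise the left-hand side as $\tfrac{1}{3}e^{-3A}\Delta_{N}(e^{3A})$, so that the equation becomes
\begin{equation*}
\Delta_{N}(e^{3A}) \;=\; 3\,e^{3A}\bigl(c_{1}\,e^{-6A}|df|^{2} + c_{2}\,|G_{4}|^{2}\bigr).
\end{equation*}
Since $N$ is closed, the left-hand side integrates to zero by Stokes' theorem while the right-hand side integrates to a sum of non-negative terms. It follows that $dA=0$, $df=0$ and $G_{4}=0$, hence $A$ is constant and $\mathsf{G}=0$.

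The main obstacle is computational: the careful bookkeeping of coefficients in the expansions of $\mathsf{G}\circ\mathsf{G}$ and $|\mathsf{G}|^{2}$ under the split ansatz, and in particular verifying that the electric and magnetic contributions enter with the \emph{same} positive sign. This sign coincidence is the true content of the theorem, for without it the Stokes argument collapses. I would also emphasise that the closedness of $N$ is indispensable, and indeed it is exactly this hypothesis that the rest of the paper circumvents: on the non-geometric space $M$ the quantity $e^{3A}$ is not a global function but a section of a non-trivial flat line bundle, while its single-valued lift to the K\"ahler covering $\tilde M$ lives on a non-compact manifold, so Stokes' theorem has no bite in either scenario.
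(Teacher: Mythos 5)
The paper does not prove this statement at all: it is quoted verbatim (as Theorem 2.1) from Maldacena--Nu\~nez \cite{Maldacena:2000mw} purely to motivate the non-geometric construction, so there is no internal proof to compare against. Judged on its own terms, your plan is a faithful reconstruction of the standard Maldacena--Nu\~nez integration argument and the overall logic is sound: the warped-product Ricci identity $\mathrm{Ric}_{\mu\nu}=-e^{2A}\eta_{\mu\nu}(\Delta_N A+3|dA|^2)$ is correct for three external dimensions, the rewriting of the left-hand side as $\tfrac{1}{3}e^{-3A}\Delta_N(e^{3A})$ is exactly the right move, and the Stokes argument on the closed internal manifold then forces $dA=0$, $\xi=0$, $G_4=0$. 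Your closing observation about how the paper escapes the hypothesis --- $e^{3A}$ being a section of a flat line bundle rather than a function, so that the scalar Laplace equation is replaced by $d\ast\varphi+G\wedge G=0$ for a closed one-form $\varphi$, solvable whenever $b^1\geq 1$ --- matches the paper's own Remark following the Hopf solution.

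Two loose ends. First, and most importantly, you defer precisely the step that constitutes the theorem: verifying that the electric and magnetic contributions enter the external Einstein equation with the \emph{same} positive sign. You are right to flag this, but as written the proof is incomplete there; in Lorentzian signature $|\mathrm{vol}_{\eta_3}\wedge df|^2=-e^{-6A}|df|^2$ is negative, and positivity of $c_1$ only emerges after combining $\tfrac{1}{2}(\mathsf{G}\circ\mathsf{G})_{\mu\nu}$ with $-\tfrac{1}{6}\mathsf{g}_{\mu\nu}|\mathsf{G}|^2$, so the "bookkeeping" is not optional. Second, Poincar\'e invariance and $d\mathsf{G}=0$ only force the electric piece to be $\mathrm{vol}_{\eta_3}\wedge\xi$ with $\xi$ a \emph{closed} one-form, not an exact one; this is harmless for your argument (the conclusion becomes $\xi=0$), but the ansatz $\xi=df$ is an unjustified restriction --- and, tellingly, it is exactly the global exactness that fails in the paper's non-geometric setting, where $\xi_a=d(\Delta_a^3)$ only locally.
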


\noindent
Therefore it would seem that if we want to define F-theory compactifications through eleven-dimensional Supergravity compactifications on an eight-dimensional manifold we will end-up having only the \emph{trivial} flux-less solution. The standard way to evade the Maldacena-Nu\~nez theorem is to include in the theory higher-derivative corrections and/or negative-tension objects. Since it is not clear whether negative-tension objects exist in M-theory, the strategy of reference \cite{Becker:1996gj} was to include the particular higher-derivative correction to eleven-dimensional Supergravity which was known at the time and which gives a negative contribution to the energy-momentum tensor of the theory. This correction was computed for the first time in the one obtained in reference \cite{Duff:1995wd}. By means of M/F-Theory duality, higher-derivative corrections to M-theory and negative-tension objects in String Theory are dual manifestations of the same phenomena \cite{Giddings:2001yu} \footnote{We thank JHEP's referee for explaining this point.}. The only dimension-full parameter in eleven-dimensional Supergravity is the Planck length $l_{P}$ and the higher-derivative corrections of M-theory arise in an expansion in powers of this constant over the relevant length-scale of the the problem under consideration. For example, the higher-derivative term considered in \cite{Becker:1996gj} is a $l^{6}_{P}$-correction. For simplicity from now on we will refer to the higher-derivative corrections of M-theory as \emph{$l_{P}$-corrections}. 

The correction to the Killing spinor equation \eqref{eq:KSE} corresponding to the correction considered in \cite{Becker:1996gj} is not known, so the analysis performed in \cite{Becker:1996gj} uses the classical Killing spinor equations and at the same time imposes $l_{P}$-corrected equations of motion. This immediately runs into a possible inconsistency, since classical supersymmetry is consistent with the classical equations of motion through the integrability condition of the Killing spinor equation, so imposing $l_{P}$-corrected equations of motion on a classical supersymmetric configuration leads to extra constraints that make the problem over determined. The possible inconsistency can be computed explicitly. Let $E$ and $F$ denote the $l_{P}$-corrected Einstein and Maxwell equations of motion. They can be written as:

\begin{equation}
E = E_{0} + E_{1}\, , \qquad F = F_{0} + F_{1}\, ,
\end{equation}

\noindent
where $E_{1}$ and $F_{1}$ denote the corresponding corrections to the classical equations of motion $E_{0}$ and $F_{0}$ and include the appropriate $l_{P}$ factors. Now, in order to study the consistency of imposing the $l_{P}$-corrected equations of motion $E$ and $F$ as well as classical supersymmetry, we only have to assume that we indeed have a solution of $l_{P}$-corrected equations of motion and compute what is the extra-constraint that appears when imposing the integrability condition of the classical Killing spinor equation. The result is, for every $v\in\mathfrak{X}(M)$, given by:

\begin{equation}
\label{eq:extraconstraint}
\iota_{v}\, E_{1}\cdot \epsilon - \frac{1}{6\cdot 3!} v^{\flat}\wedge (\ast F_{1})\cdot\epsilon + \frac{1}{3!} \iota_{v}(\ast F_{1})\cdot\epsilon  = 0\, .
\end{equation} 

\noindent 
Therefore, if we want a solution of the classical Killing spinor equation to be a solution of the $l_{P}$-corrected equations of motion, the constraint \eqref{eq:extraconstraint} must be necessarily satisfied.

The outcome of the analysis of reference \cite{Becker:1996gj} is that classical supersymmetry imposes the manifold to be a Calabi-Yau four-fold, although the physical metric does not correspond to the Ricci-flat Calabi-Yau metric. Strictly speaking then, if we want to have a solution of the $l_{P}$-corrected equations of motion, not every such Calabi-Yau is an admissible compactification background: only those satisfying equation \eqref{eq:extraconstraint}, if any, should be considered as honest solutions of the equations of motion. Let us be more explicit for the case considered in reference \cite{Becker:1996gj}. In reference \cite{Becker:1996gj} the equations of motion of classical eleven-dimensional Supergravity were modified by the only known $l_{P}$-correction at the time, obtained in reference \cite{Duff:1995wd}, and which only affects the equation of motion for $\mathsf{G}$. Hence, $E_{1} = 0$ and $F_{1}$ is given by:

\begin{equation}
\label{eq:X8}
F_{1} = \beta X_{8} = \beta\left(p^{2}_{1} - p_{2}\right)\, ,
\end{equation}

\noindent
where $p_{1}$ and $p_{2}$ are respectively the first and second Pontryagin classes of $M$, and $\beta$ is an appropriate constant. Plugging equation \eqref{eq:X8} into equation \eqref{eq:extraconstraint} we obtain the explicit constraint that the Calabi-Yau four-folds coming out of the supersymmetry analysis of \cite{Becker:1996gj} have to satisfy in order to be an honest solution of the corrected equations of motion:

\begin{equation}
\label{eq:extraconstraintX8}
\left(- v^{\flat}\wedge (\ast X_{8}) + 6\,\iota_{v}(\ast X_{8})\right)\cdot\epsilon = 0 \, .
\end{equation}

\noindent
Hence, and again strictly speaking, equation \eqref{eq:extraconstraintX8} constrains the class of admissible F-theory compactification manifolds. Admissible in the sense of honestly solving the equations of motion of $l_{P}$-corrected eleven-dimensional Supergravity and at the same time satisfying the classical Killing spinor equation of eleven-dimensional Supergravity. Of course, this \emph{problem} is well-known to experts on the field, but unfortunately, as long as the eleven-dimensional Supergravity $l_{P}$-corrected Killing spinor equation is not known, it seems not possible to solve it in a completely rigorous way. Important steps in this direction have been made in references \cite{Grimm:2014xva,Grimm:2014efa,Grimm:2015mua}, where a thoroughly and consistent analysis of M-theory compactifications in the presence of $l_{P}$-corrections has been made, and even an educated guess for the $l_{P}$-corrected Killing spinor equation has been proposed. Remarkably enough, the integrability condition of the $l_{P}$-corrected proposal for the Killing spinor equation is compatible with the $l_{P}$-corrected equations of motion, which definitely suggests that if the educated guess is not already the correct $l_{P}$-corrected Killing spinor equation, it cannot be far from being it. One of the main conclusions of \cite{Grimm:2014efa} is that even when one consistently takes into account $l_{P}$-corrections, the internal manifold of the compactification is still topologically a Calabi-Yau four-fold. This strongly suggests that the conclusion of reference \cite{Becker:1996gj} is solid after properly taking into account $l_{P}$-corrections. 

A possible, temporary, solution to the problem of imposing classical supersymmetry and $l_{P}$-corrected equations of motion, would be to consider only the elliptically fibered Calabi-Yau four-folds, if any, that satisfy the constraint \eqref{eq:extraconstraint}. This way we would be sure that we are dealing with honest solutions to $l_{P}$-corrected eleven-dimensional Supergravity and at the same time it would single out a preferred class of eliptically fibered Calabi-Yau manifolds. 

In this letter we are going to propose a simple class of non-geometric compactifications that directly evades the Maldacena-Nu\~nez theorem at the classical level, by a mechanism original of reference \cite{twistHeterotic}. Therefore, no $l_{P}$-corrections are needed to obtain non-trivial solutions, and thus no inconsistency arises, since there exist closed manifolds with non-trivial flux and warp factor that solve the equations of motion of the theory at the classical level. We don't want to imply with this that $l_{P}$-corrections are not relevant: they certainly are of utmost importance in order to understand String/M-theory backgrounds. However, we think that it may be a good idea to understand first non-geometric backgrounds without corrections, namely the \emph{zero-order} solution, before considering $l_{P}$-corrections to non-geometric backgrounds. The non-geometric solutions presented in this letter thus constitute the zero order non-geometric solution, which happens to be non-trivial, in the sense that it allows for non-trivial flux and warp-facor, in contrast to what happens in the geometric case. In any case, as we have said, ideally the ultimate goal would be to include and understand $l_{P}$-corrections for geometric as well as for non-geometric compactification backgrounds.


\section{A class of non-geometric M-theory compactification backgrounds}
\label{sec:Nogeometric}


In reference \cite{Shahbazi:2015mva} a \emph{twist} in the standard gluing of the local equations of motion of eleven-dimensional Supergravity on eight-manifolds was proposed, by means of the use of a particular atlas on the space-time manifold. In this section we are going to adopt a different point of view, proposing a slightly modified construction, which highlights the interpretation of such twisted supersymmetric compactification backgrounds as non-geometric compactification backgrounds. As a result, we will obtain that the total space of the non-geometric solution is still a manifold, although with different topological and geometric properties from the corresponding geometric solution, and that the Supergravity fields become tensors taking values on a particular line bundle.

\begin{remark}
The idea is to consider the local analysis of reference \cite{Becker:1996gj} and patch it globally in a non-trivial way by using not only local diffeomorphisms but also the trombone symmetry of the warp factor. We will see that when performing such a non-trivial patching the global space is still a manifold, but with very different geometric properties and topology from the standard solution of reference \cite{Becker:1996gj}.
\end{remark}

\noindent
The starting point is the standard one for compactification spaces. We will assume that the space-time manifold $\mathcal{M}$ can be written as a topologically trivial direct product

\begin{equation}
\label{eq:productmanifold}
\mathcal{M} = \mathbb{R}_{1,2}\times\mathcal{M}_{8}\, , 
\end{equation}

\noindent
where $\mathbb{R}_{1,2}$ is three-dimensional Minkowski space-time and $\mathcal{M}_{8}$ is an eight-dimensional, Riemannian, compact, oriented, spinnable manifold. According to the product structure (\ref{eq:productmanifold}) of the space-time manifold $\mathcal{M}$, the tangent bundle splits as follows\footnote{We omit the pull-backs of the canonical projections.}

\begin{equation}
\label{eq:tangentdecomposition}
T\mathcal{M} = \mathbb{R}_{1,2}\oplus T\mathcal{M}_{8}\, . 
\end{equation}

\noindent
Let $\mathcal{U} = \left\{U_{a}\right\}_{a\in I}$ be a good open covering of $M_{8}$. Then: 

\begin{equation}
\mathcal{V} = \mathbb{R}_{1,2}\times\mathcal{U} = \left\{V_{a} = \mathbb{R}_{1,2}\times U_{a}\right\}_{a\in I}\, ,
\end{equation}

\noindent
is a good open covering of $M$. We define in $M$ a family $\mathsf{g} = \left\{ \mathsf{g}_{a}\right\}_{a\in I}$ of local the Lorentzian metrics, where $\mathsf{g}_{a}$ is a locally defined metric on $\mathbb{R}_{1,2}\times U_{a}$, given by:

\begin{equation}
\label{eq:11dmetricproduct}
\mathsf{g}_{a} = \Delta^2_{a}\eta_{1,2}\times g_{8}|_{U_{a}}\, , 
\end{equation}

\noindent
where $g_{8}$ is a Riemannian metric on $\mathcal{M}_{8}$ and $\Delta_{a}\in C^{\infty}\left( U_{a}\right)$. Similarly, we define in $M$ a family $\mathsf{G} = \left\{ \mathsf{G}_{a}\right\}_{a\in I}$ of local closed four-forms, where $\mathsf{G}_{a}$ is a locally defined closed four-form on $\mathbb{R}_{1,2}\times U_{a}$, given by:

\begin{equation}
\label{eq:Gdecomposition}
\mathsf{G}_{a} = \mathrm{Vol}_{1,2}\wedge \xi_{a} +  G|_{U_{a}}\, , \qquad \xi_{a}\in\Omega^{1}\left( U_{a}\right)\, , \quad G\in\Omega^{4}_{cl}\left(\mathcal{M}_{8}\right) \, ,
\end{equation}

\noindent
where $\mathrm{Vol}$ is the standard volume form of Minkowski space. The idea now is to impose $(\mathsf{g}_{a},\mathsf{G}_{a})$ to be a local solution of the equations of motion of eleven-dimensional Supergravity for every $a\in I$. Then, we will patch this local solutions globally by using not only local diffeomorphisms but also a particular global symmetry of the equations of motion. As we will see in a moment, the global geometry of $M$ will depend on the specific patching used for the family of local solutions. More precisely, for each $a\in I$ of the good open cover $\mathcal{U} = \left\{U_{a}\right\}_{a\in I}$  of $M_{8}$, let us denote by:

\begin{equation}
Sol_{a} = (g_{8}|_{U_{a}}, G|_{U_{a}}, \Delta_{a}, \xi_{a})\, , \qquad \Delta_{a}\in C^{\infty}(U_{a}), \qquad \xi_{a}\in \Omega^{1}_{cl}(U_{a})\, ,
\end{equation}

\noindent
a local solution to the equations of motion of the theory, in the compactification background explained above. Notice that, in contrast to $\Delta_{a}$ and $\xi_{a}$, which are defined only locally, $g_{8}|_{U_{a}}$ and  $G|_{U_{a}}$ are just the restriction of the globally defined tensors $g_{8}$ and $G$ to $U_{a}$, so they are well-defined globally. Now, a standard compactification would construct a global solution to the equations of motion by globally patching the family of local solutions $\left\{Sol_{a}\right\}_{a\in I}$ using just local diffeomorphisms. This way we would obtain a globally well-defined metric $\mathsf{g}$ and four-form $\mathsf{G}$ on $M$. In contrast, a non-geometric compactification is characterized by patching-up local solutions by using not only local diffeomorphisms but also symmetries of the equations of motion. 

What was done in reference \cite{Shahbazi:2015mva} was to patch up the global solution using local diffeomorphisms and also a particular symmetry of the equations of motion: the \emph{trombone symmetry} of the warp factor, consisting on rescalings of the warp factor by a constant. In reference \cite{Shahbazi:2015mva} we used a very particular atlas in order to obtain that the Supergravity fields are tensors. We will drop here that condition and we will adopt the natural point of view of a non-geometric compactification: the global Supergravity fields obtained by the non-trivial patching of the local solutions may not be tensors but objects of a more general type. In our case we will obtain that the Supergravity fields are tensors valued on a particular line bundle $L$.  

Hence, the kind of compactification backgrounds described in \cite{Shahbazi:2015mva} are indeed non-geometric, although of a simple type, namely the symmetry used to patch-up the solution globally is a simple rescaling of the warp factor. Remarkably enough, the global space of the compactification is still a manifold, something that is not guaranteed for more general non-geometric compactifications. 

Let us do the global patching explicitly. Given the good open cover $\mathcal{U}$, for each $U_{a}\in\mathcal{U}$ we have a locally defined warp factor $\Delta_{a}\in C^{\infty}(U_{a})$. As we have said, two local warp factors $\Delta_{a}$ and $\Delta_{b}$, $a, b\in I$ are related by a rescaling of the warp factor on the non-empty intersection $U_{a}\cap U_{b} \neq \left\{\emptyset\right\}$ of $U_{a}$ and $U_{b}$. Then we have:

\begin{equation}
\label{eq:patchingDelta}
\Delta_{a} = \beta_{ab} \Delta_{b}\, , \qquad \beta_{ab}\in \mathbb{R}^{\ast}\, ,
\end{equation}

\noindent
which as we have said is a symmetry of the equations of motion, as it is required to obtain a global solution. Equation \eqref{eq:patchingDelta} implies that:

\begin{equation}
\beta_{ab} = \beta^{-1}_{ba}\, ,\qquad \beta_{ab}\beta_{bc}\beta_{ca} = 1\, ,
\end{equation}

\noindent
where the last equation holds on the triple non-empty triple intersection $U_{a}\cap U_{b}\cap U_{c} \neq  \left\{\emptyset\right\}$. Hence: 

\begin{equation}
L = \left(\left\{U_{a}\right\}, \beta_{ab}, \mathbb{R}\right)\, ,
\end{equation}

\noindent
defines a real line bundle $L$ over $M$. Hence, the warp factor cannot be described as a globally defined function on $M_{8}$. however, it does define a globally defined closed one form $\varphi\in\Omega^{1}(M_{8})$, given on every open set $U_{a}\in \mathcal{U}$ by:

\begin{equation}
\varphi|_{U_{a}} = d\log \, \Delta_{a}\, .
\end{equation}

\noindent
Therefore, we have obtained what is the global structure of the warp factor: after trivializing $L$, it is given as closed one-form by a connection on $L$. 

We have to patch-up now the local solutions $\left\{ Sol_{a}\right\}_{a\in I}$ of the the theory. We are not interested in patching up the most general local compactification background, but only the $\mathcal{N}=2$ supersymmetric compactification backgrounds of reference \cite{Becker:1996gj}. Therefore, each solution $Sol_{a},\, a\in I,\,$ will be a local solution of the type presented in \cite{Becker:1996gj}, namely globally conformal to a Calabi-Yau four-fold. Therefore, from reference \cite{Becker:1996gj} we obtain that:

\begin{equation}
Sol_{a} = (g_{8}|_{U_{a}}, G|_{U_{a}}, \Delta_{a}, \xi_{a})\, , \qquad \Delta_{a}\in C^{\infty}(U_{a}), \qquad \xi_{a}\in \Omega^{1}_{cl}(U_{a})\, ,
\end{equation}

\noindent
is equipped with a local $SU(4)$-structure $(J_{a},\omega_{a},\Omega_{a})$ satisfying:

\begin{equation}
\label{eq:localstructureeq}
\nabla_{a} J_{a} = 0\, , \qquad\nabla_{a}\omega_{a} = 0\, , \qquad \nabla_{a}\Omega_{a} = 0\, ,
\end{equation}

\noindent
where $J_{a}$ is a local complex structure, $\omega_{a}$ is a local symplectic structure, $\Omega_{a}$ is a local $(4,0)$-form and $\nabla_{a}$ is the locally-defined Levi-Civita connection associated to the local metric $g_{a} = \Delta_{a} g_{8}|_{U_{a}}$. In other words, $(J_{a},\omega_{a},\Omega_{a})$ is a local integrable $SU(4)$-structure. In addition, we have \cite{Becker:1996gj}:

\begin{equation}
\label{eq:DeltaMaxwelllocal}
\xi_{a} = d(\Delta^{3}_{a})\, , \qquad d\ast d\log \Delta_{a} + G|_{U_{a}}\wedge G|_{U_{a}} = 0\, , \qquad \omega_{a}\wedge G|_{U_{a}} = 0\, , \qquad G\in \Omega^{2,2}(U_{a})\, ,
\end{equation}

\noindent
where $G|_{U_{a}}$ is $(2,2)$ with respect to $J_{a}$.

\begin{remark}
As we explained in section \eqref{sec:Mtheory3d}, supersymmetric compactification backgrounds are time-like supersymmetric solutions, and thus it is enough to satisfy the Maxwell equations for $\mathsf{G}$ in order to satisfy all the equations of motion.
\end{remark}

\noindent
Using now that the global patching is performed by means of only local diffeomorphisms and the trombone symmetry, together with the results of reference \cite{Becker:1996gj}, we obtain that, for each $a\in I$, the local $SU(4)$-structure $(J_{a},\omega_{a},\Omega_{a})$ can be written as:

\begin{equation}
\label{eq:localtoglobal}
g_{8\, a} = \Delta_{a} g_{8}|_{U_{a}}\, , \qquad J_{a} =  J|_{U_{a}} \, , \qquad \omega_{a} = \Delta_{a} \omega|_{U_{a}}\, , \qquad \Omega_{a} = \Delta^{2}_{a} \Omega|_{U_{a}}\, .
\end{equation}

\noindent
where $(g_{8},J,\omega,\Omega)$ is a global $SU(4)$-structure on $M_{8}$, namely $J$ is an almost-complex structure, $\omega$ is the fundamental two-form and $\Omega$ is the $(4,0)$. In order to fully characterize the non-geometric compactification background, we have to obtain the geometry of $M_{8}$ from the local supersymmetry conditions \eqref{eq:localstructureeq}, \eqref{eq:DeltaMaxwelllocal} and \eqref{eq:localtoglobal}.

\begin{prop}
\label{prop:localtoglobal}
Equations \eqref{eq:localstructureeq}, \eqref{eq:DeltaMaxwelllocal} and \eqref{eq:localtoglobal} are equivalent to $(M_{8},J,g_{8})$ being an Hermitian manifold with integrable almost-complex structure $J$ which is equipped with a $SU(4)$-structure $(J, \omega, \Omega)$ such that:

\begin{equation}
\label{eq:lcKglobal}
\qquad d\omega = \varphi \wedge\omega\, , \qquad \nabla_{a}\Omega_{a} = 0\, .
\end{equation}

\noindent
and in addition

\begin{equation}
\label{eq:DeltaMaxwellglobal}
\xi_{a} = d(\Delta^{3}_{a})\, , \qquad d\ast d\varphi + G \wedge G = 0\, , \qquad \omega\wedge G = 0\, , \qquad G\in \Omega^{2,2}(M_{8})\, ,
\end{equation}

\noindent
Therefore, $M_{8}$ is a locally conformally K\"ahler manifold with Lee form $\varphi$ and and locally conformally parallel $(4,0)$-form $\Omega$.
\end{prop}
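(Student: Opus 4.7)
The plan is to prove the equivalence by globalizing each local datum defined on the patches $U_a$, and conversely reconstructing the local $SU(4)$-structures from the global lcK data. The key observation is that the local metric $g_a = \Delta_a g_8|_{U_a}$ makes each $(U_a, g_a, J_a)$ a local Kähler manifold conformally related to the global Hermitian structure $(M_8, g_8, J)$, which is precisely the definition of local conformal Kählerness.

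For the forward direction, the identity $J_a = J|_{U_a}$ in \eqref{eq:localtoglobal} already exhibits $J$ as a global almost complex structure, and integrability on each patch (a consequence of $\nabla_a J_a = 0$) gives integrability globally. Compatibility of $g_a$ and $J_a$ together with $g_a = \Delta_a g_8|_{U_a}$ forces $g_8$ to be Hermitian with respect to $J$ with fundamental form $\omega$. The lcK equation then follows from expanding
\begin{equation*}
0 = d\omega_a = d(\Delta_a\,\omega|_{U_a}) = d\Delta_a \wedge \omega|_{U_a} + \Delta_a\, d\omega|_{U_a},
\end{equation*}
which gives $d\omega|_{U_a} = -d\log\Delta_a \wedge \omega|_{U_a}$; this globalizes to $d\omega = \varphi \wedge \omega$ (up to the standard sign convention on the Lee form) because $\varphi|_{U_a} = d\log\Delta_a$ was already shown to assemble into a globally defined closed one-form on $M_8$ in the construction preceding the proposition. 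The analogous computation applied to $\Omega_a = \Delta_a^2\,\Omega|_{U_a}$, combined with $\nabla_a \Omega_a = 0$ and the standard conformal transformation formula for the Levi-Civita connection, yields the statement that $\Omega$ is locally conformally parallel. The Maxwell equation and the algebraic conditions $\omega \wedge G = 0$ and $G \in \Omega^{2,2}$ are tensorial in $\omega$ and $G$ and therefore patch directly from $U_a$ to all of $M_8$; the same is true for $\xi_a = d(\Delta_a^3)$, which is the local expression of a connection on the line bundle $L^{\otimes 3}$.

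For the converse, a good cover allows one to write $\varphi|_{U_a} = d\log\Delta_a$ for smooth positive $\Delta_a$ (unique up to positive multiplicative constants, precisely matching the freedom in \eqref{eq:patchingDelta}), and setting $g_a := \Delta_a g_8|_{U_a}$, $\omega_a := \Delta_a \omega|_{U_a}$, $\Omega_a := \Delta_a^2 \Omega|_{U_a}$ and reversing the computations above produces the required local Kähler/Calabi-Yau data. The main obstacle will be the consistency of the $\Omega_a$ on overlaps: from $\Delta_a = \beta_{ab}\Delta_b$ one finds $\Omega_a = \beta_{ab}^2 \Omega_b$, so strictly speaking $\Omega$ is a section of the associated weight bundle $L^{\otimes 2}$ and only becomes a genuine $(4,0)$-form after a trivialization, which must either be assumed or interpreted in the appropriate twisted sense. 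Once this conventional point is fixed, the remaining calculations — conformal weight of the Hodge star in the Maxwell equation, the sign of the Lee form, and verification of primitivity — are direct and routine.
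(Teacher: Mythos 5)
Your proposal is correct and follows essentially the same route as the paper: globalize each local datum via the conformal rescalings in \eqref{eq:localtoglobal}, derive $d\omega=\varphi\wedge\omega$ from $d(\Delta_a\,\omega|_{U_a})=0$, observe that the algebraic conditions on $G$ are insensitive to the rescaling of $\omega$, and reverse the computations for the converse using a good cover to write $\varphi|_{U_a}=d\log\Delta_a$. If anything you are more careful than the paper's own (quite terse) proof, in particular in flagging the sign convention for the Lee form and the fact that the $\Omega_a$ only glue as a section of a power of the weight bundle $L$ rather than as an honest global $(4,0)$-form.
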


\begin{proof}
From \eqref{eq:localtoglobal} we see that the local complex structures $\left\{ J_{a}\right\}_{a\in I}$ patch up to a well-defined almost-complex structure $J$ in $M_{8}$. Writing the Nijenhuis tensor of $J$ as:

\begin{equation}
N|_{U_{a}}(u,v) = (\tilde{\nabla}^{a}_{u}J)(J v ) - (\tilde{\nabla}^{a}_{v} J)(J u) + (\tilde{\nabla}^{a}_{Ju}J)(v) - (\tilde{\nabla}^{a}_{Jv}J) (u)\, ,\qquad u,v\in\mathfrak{X}(M_{8})\, ,
\end{equation}

\noindent
we obtain that $N|_{U_{a}} = 0$ for every $U_{a}\in\mathcal{U}$, and thus $J$ is integrable and $(M_{8},g_{8},J)$ is a Hermitian manifold. In addition, $G$ is globally $(2,2)$ in $M_{8}$. Since $\omega_{a}$ is, for each $a\in I$, a rescaling of $\omega|_{U_{a}}$ we obtain that $\omega_{a}\wedge G|_{U_{a}} = 0$ and $G\in \Omega^{2,2}(U_{a})$ are equivalent to:

\begin{equation}
\omega\wedge G = 0\, , \qquad G\in \Omega^{2,2}(M_{8})\, .
\end{equation}

\noindent
Using now that $\varphi|_{U_{a}} = d\log\Delta_{a}$, we obtain that the global form of the equation of motion for the warp factor is:

\begin{equation}
d\ast \varphi + G\wedge G = 0\, . 
\end{equation}

\noindent
Since $J_{a}$ is a complex structure, we obtain that the condition $\nabla_{a}\omega_{a}$ is equivalent to $d\omega_{a} = 0$, which in turn is equivalent to:

\begin{equation}
d\omega = \varphi\wedge\omega\, .
\end{equation}

\noindent
For the converse, it is easy to see that equations \eqref{eq:lcKglobal} and \eqref{eq:DeltaMaxwellglobal} locally imply equations \eqref{eq:localstructureeq}, \eqref{eq:DeltaMaxwelllocal} and \eqref{eq:localtoglobal}.
\end{proof}

\noindent
Using now proposition \ref{prop:localtoglobal}, we have then proven the following theorem:

\begin{thm}
\label{thm:susynongeometric}
Let $M_{8}$ be an eight-dimensional compact manifold equipped with a $SU(4)$-structure $(J,\omega,\Omega)$ such that $J$ is integrable, $\omega$ is a locally conformally K\"ahler structure with Lee-form $\varphi$ and $\Omega$ is locally conformally parallel. Then, $(M_{8},J,\omega,\Omega)$ is a non-geometric admissible M-theory compactification background to three-dimensional Minkowski space-time provided that there exists a closed four-form $G\in \Omega^{4}(M_{8})$ such that:

\begin{equation}
\omega\wedge G = 0\, , \qquad G\in \Omega^{2,2}(M_{8})\, ,
\end{equation}

\noindent
and a solution to the equation of motion: 

\begin{equation}
\label{eq:globalwarpfactor}
d\ast \varphi + G\wedge G = 0\, . 
\end{equation}

\noindent
of the warp factor exists.
\end{thm}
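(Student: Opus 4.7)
The plan is to observe that this theorem is essentially the global repackaging of Proposition \ref{prop:localtoglobal} together with the time-like supersymmetry argument from Section \ref{sec:Mtheory3d}, so the work is to run the equivalence of that proposition in the "if" direction and check that the resulting local data genuinely assemble into a non-geometric compactification in the sense defined earlier in this section.

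First I would fix a good open cover $\mathcal{U} = \{U_{a}\}_{a\in I}$ of $M_{8}$ on which the flat weight bundle $L$ defined by $\varphi$ trivializes. Since $\varphi$ is closed, on each $U_{a}$ it admits a primitive $\log \Delta_{a}$ for some positive $\Delta_{a}\in C^{\infty}(U_{a})$, and on double overlaps $\Delta_{a} = \beta_{ab}\Delta_{b}$ for constants $\beta_{ab}\in\mathbb{R}^{*}$ satisfying the cocycle condition, which recovers exactly the line bundle $L$ of equations \eqref{eq:patchingDelta}. I then define the local SU(4)-structure $(J_{a},\omega_{a},\Omega_{a})$ on each $U_{a}$ by \eqref{eq:localtoglobal}, the local metric $g_{8\,a} = \Delta_{a} g_{8}|_{U_{a}}$, and $\xi_{a} = d(\Delta_{a}^{3})$. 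The crucial computation, which is exactly the converse direction in the proof of Proposition \ref{prop:localtoglobal}, is that $d\omega = \varphi\wedge\omega$ translates into $d\omega_{a} = 0$ on $U_{a}$, and since $J$ is integrable this gives $\nabla_{a}\omega_{a} = 0$; together with the locally conformally parallel assumption on $\Omega$, this yields $\nabla_{a}\Omega_{a} = 0$ and $\nabla_{a}J_{a} = 0$, so each patch is Calabi--Yau, matching the setup of \cite{Becker:1996gj}.

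Next I would verify the local equations \eqref{eq:DeltaMaxwelllocal}: $\omega\wedge G = 0$ and the type condition $G\in\Omega^{2,2}(M_{8})$ restrict to the analogous statements in each $U_{a}$ because $\omega_{a}$ is a positive rescaling of $\omega|_{U_{a}}$ and $J_{a} = J|_{U_{a}}$, while the global warp factor equation \eqref{eq:globalwarpfactor} localizes to $d\ast d\log\Delta_{a} + G|_{U_{a}}\wedge G|_{U_{a}} = 0$ since $\varphi|_{U_{a}} = d\log\Delta_{a}$. Consequently each $Sol_{a} = (g_{8}|_{U_{a}}, G|_{U_{a}}, \Delta_{a}, \xi_{a})$ is a local $\mathcal{N}=2$ supersymmetric solution of the type constructed in \cite{Becker:1996gj}, i.e.\ it satisfies the classical Killing spinor equation and the local Maxwell equation for $\mathsf{G}$ on $V_{a} = \mathbb{R}_{1,2}\times U_{a}$. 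Invoking the time-like class argument recalled in Section \ref{sec:Mtheory3d}, the Einstein equation $E_{0} = 0$ then follows automatically from the integrability of the Killing spinor equation, so each $Sol_{a}$ is a bona fide local solution of the full system \eqref{eq:eqsmotion}.

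Finally I would check that the gluing of these local solutions on overlaps is precisely by the trombone rescaling symmetry of the warp factor, which is what distinguishes this as a non-geometric rather than a standard geometric compactification. On $U_{a}\cap U_{b}$ the relation $\Delta_{a} = \beta_{ab}\Delta_{b}$ combined with \eqref{eq:localtoglobal} produces exactly a constant rescaling of $\Delta$ and the compatible rescalings of $(\omega_{a},\Omega_{a})$, while $g_{8}$, $J$, $\omega$, $\Omega$, $G$ and $\varphi$ are global objects by hypothesis, so the cocycle is consistent. I expect the only mild subtlety, which I would flag rather than develop in detail, is keeping track of the fact that the Supergravity fields are not ordinary tensors but sections of appropriate powers of $L$ (the warp factor is a section of $L$, $\omega$ is $L$-valued, etc.), so that the family $\{Sol_{a}\}_{a\in I}$ defines a genuine non-geometric background in the sense of the remark preceding Proposition \ref{prop:localtoglobal}; everything else is a direct application of that proposition.
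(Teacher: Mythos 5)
Your proposal is correct and follows essentially the same route as the paper: the paper proves this theorem simply by invoking Proposition \ref{prop:localtoglobal} (whose converse direction it leaves as ``easy to see'') together with the earlier remark that time-like supersymmetric backgrounds only require the Maxwell equation for $\mathsf{G}$ to solve the full system \eqref{eq:eqsmotion}. You have merely spelled out the converse direction of the proposition and the trombone-gluing cocycle more explicitly than the paper does, which is a faithful expansion rather than a different argument.
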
 

\noindent
The non-geometric background that we have obtained is very different from the standard Calabi-Yau compactification background, as a result of the non-trivial global patching. The topology of both manifolds is completely different, since a Calabi-Yau manifold is K\"ahler and a locally conformally K\"ahler manifold is not. In particular, the \emph{hodge diamonds} of both manifolds are completely different. Hence, we should expect the effective theories of the corresponding compactifications to be completely different too. In the next section we will indeed provide an explicit example that solves all the equations of motion for $\xi$ and $G$, giving thus a counterexample to the Maldacena-Nu\~nez no-go theorem. The Supergravity fields are no longer global tensors, but tensors taking values on the line bundle $L$. In fact, we have:

\begin{equation}
\mathsf{g} \in \Gamma(S^{2}T^{\ast}M_{8},L)\, , \qquad \xi\in \Omega^{1}(M_{8},L)\, .
\end{equation}

\noindent
To summarize, we have found a simple class of non-geometric M-theory backgrounds in which the total space is again a manifold and which:

\begin{itemize}

\item Need an underlying non-simply connected topological manifold. If the manifold is simply-connected the compactification background becomes geometric.

\item Evade the Maldacena-Nu\~nez no-go theorem.

\end{itemize}

\noindent
These are properties that are expected to be present in non-geometric backgrounds, as shown in reference \cite{twistHeterotic} in the Heterotic case. It is because of the second feature that we will be able to construct an explicit eight-dimensional non-geometric background which evades the Maldacena-Nu\~nez no-go theorem and thus evades any possible inconsistency coming from introducing $l_{P}$-corrections in the equations of motion but not in the classical Killing spinor equations.

\begin{remark}
A locally conformally K\"ahler $k$-fold equipped with a locally conformally parallel $(k,0)$-form was dubbed in \cite{Shahbazi:2015mva} a locally conformally Calabi-Yau manifold for obvious reasons. This is precisely what is required by supersymmetry in the class of compactifications considered in this note. We will refer then to the backgrounds specified in \ref{thm:susynongeometric} as locally conformally Calabi-Yau.
\end{remark}


\subsection{An explicit solution on a complex Hopf four-fold}
\label{sec:complexHopf}


The explicit solution that we will construct is a very particular case of a complex Hopf four-fold. Let us start though with a general construction.

A complex Hopf manifold $\mathbb{C}H^{m}_{\alpha}$ of complex dimension $m$ is the quotient of $\mathbb{C}^{m}\backslash \left\{ 0\right\}$ by the free action of the infinite cyclic group $\mathfrak{S}_{\alpha}$ generated by $z\to \alpha z$, where $\alpha\in\mathbb{C}^{\ast}$ and $0<|\alpha| <1$. In other words, it is $\mathbb{C}^{m}\backslash \left\{ 0\right\}$ quotiented by the free action of $\mathbb{Z}$, where $\mathbb{Z}$, with generator $\alpha$ acting by holomorphic contractions. The group $\mathfrak{S}_{\alpha}$ acts freely on $\mathbb{C}^{m}\backslash \left\{ 0\right\}$ as a properly discontinuous group of complex analytic transformations of $\mathbb{C}^{m}\backslash \left\{ 0\right\}$. Hence, the quotient space:

\begin{equation}
\mathbb{C}H^{m}_{\alpha} = \left( \mathbb{C}^{m}\backslash \left\{ 0\right\}\right)/\mathfrak{S}_{\alpha}\, ,
\end{equation}

\noindent
is a complex $m$-fold. It can be shown that complex $m$-dimensional Hopf manifolds $\mathbb{C}H^{m}_{\alpha}$ are diffeomorphic to $S^{1}\times S^{2m-1}$. As a result:

\begin{equation}
b^{1}\left( \mathbb{C}H^{m}_{\alpha}\right) = b^{2m-1}\left( \mathbb{C}H^{m}_{\alpha}\right) = 1\, ,
\end{equation}

\noindent
namely the first betti number is odd and hence $\mathbb{C}H^{m}_{\lambda}$ does not admit a K\"ahler metric. Notice the standard K\"ahler structure on $\mathbb{C}^{m}\backslash \left\{ 0\right\}$ does not descend to $\mathbb{C}H^{m}_{\alpha}$ since it is not $\mathfrak{S}_{\alpha}$-invariant. It admits however a locally conformally K\"ahler structure. To prove this, let us take $\mathbb{C}^{m}\backslash \left\{ 0\right\}$ equipped with the following metric and (1,1)-form: 

\begin{equation}
\label{eq:g0omega0}
g_{0} = \frac{dz^{t}\otimes d\bar{z}}{\bar{z}^{t}z}\, , \qquad\omega_{0} = i\frac{dz^{t}\wedge d\bar{z}}{\bar{z}^{t}z}\, . 
\end{equation}

\noindent
The (1,1)-form $\omega_{0}$ is not closed but it satisfies:

\begin{equation}
d\omega_{0} = \varphi_{0}\wedge\omega_{0}\, , 
\end{equation}

\noindent
where: 

\begin{equation}
\varphi_{0} = \frac{z^{t}d\bar{z}+\bar{z}^{t}dz}{\bar{z}^{t}z}\, ,
\end{equation}

\noindent
Since $g_{0}$, $\omega_{0}$ and $\varphi_{0}$ are invariant under $\mathfrak{S}_{\alpha}$ transformations, they descend to a well defined metric $g$ and (1,1)-form $\omega$ in $\mathbb{C}H^{m}_{\alpha}$, with corresponding Lee-form $\varphi$. In $\mathbb{C}^{m}\backslash \left\{ 0\right\}$ we have that $\varphi_{0}$ is exact, since $\varphi_{0} = d \log z^{t}\bar{z}$. This should be expected, as $(g_{0},\omega_{0})$ is globally conformal to the standard K\"ahler structure on $\mathbb{C}^{m}\backslash \left\{ 0\right\}$. However, $\varphi$ is not exact in $\mathbb{C}H^{m}_{\alpha}$, since $\log z^{t}\bar{z}$ is not globally well-defined in $\mathbb{C}H^{m}_{\alpha}$.

\subsubsection{The non-geometric solution:} Let us take now $m=4$, and $\alpha = \bar{\alpha}$. Then $\mathbb{C}H^{4}_{\alpha}$ is an eight-dimensional manifold of the type just described above. In particular, it is equipped with a locally conformally K\"ahler structure $(g,\omega)$ induced by the quotient of the $(g_{0},\omega_{0})$ given in equation \eqref{eq:g0omega0}. When $\alpha$ is real we can define in addition another globally defined (4,0)-form, induced by the following form on $\mathbb{C}^{m}\backslash \left\{ 0\right\}$:

\begin{equation}
\Omega_{0} = \frac{dz^{1}\wedge dz^{2}\wedge dz^{3}\wedge dz^{4}}{|z|^4}\, .
\end{equation}

\noindent
Now, since $\alpha$ is real, $\Omega_{0}$ is $\mathfrak{S}_{\alpha}$ invariant and therefore it descends to a well-defined (4,0)-form $\Omega$ on $\mathbb{C}H^{4}_{\alpha}$ satisfying:

\begin{equation}
\nabla_{a}\Omega_{a} = 0\, ,
\end{equation}

\noindent
and in particular:

\begin{equation}
d\Omega = 2\varphi\wedge \Omega\, .
\end{equation}

\noindent
Therefore $(g,\omega,\Omega)$ is precisely a locally conformally Calabi-Yau structure on $\mathbb{C}H^{4}_{\alpha}$, which is what was required by supersymmetry, see theorem \ref{thm:susynongeometric}. Hence, in order to obtain a full non-geometric solution, we just have to solve the equation of motion for $\mathsf{G} = \left\{ \mathsf{G}_{a}\right\}_{a\in I}$. Notice that, as we explained in section \ref{sec:Nogeometric}, local supersymmetry imposes:

\begin{equation}
\xi_{a} = d(\Delta^{3}_{a})\, ,
\end{equation}

\noindent
and that the only equation of motion that remains to be solved is the equation of motion for the warp factor, namely

\begin{equation}
d\ast\varphi + G\wedge G = 0\, .
\end{equation}

\noindent
In order to solve it, we are going to take $G = 0$. Notice that this does not trivialize the flux $\mathsf{G}$ since it is still non-zero and has \emph{one leg} on $M_{8}$. Taking $G = 0$ we obtain that the equation of motion for the warp factor reduces to:

\begin{equation}
\label{eq:coclosedvarphi}
d\ast\varphi = 0\, .
\end{equation}

\noindent
Since $\varphi$ is already closed $\varphi$ must be harmonic in order to solve equation \eqref{eq:coclosedvarphi}. It turns out that $\varphi$ is indeed harmonic; which, since it is already closed and it is the Lee-form of $\omega$, it is the same as requiring $g_{8}$ to be the Gauduchon metric. Therefore:

\begin{equation}
Sol = \left(\mathbb{C}H^{4}_{\alpha}, g_{8}, \omega, \Omega, \varphi\right)\, ,
\end{equation}

\noindent
is a compact \emph{non-geometric} solution of eleven-dimensional Supergravity with non-trivial flux and warp-factor. From a different point of view, one can see that $Sol$ is locally conformal to flat space equipped with the standard Calabi-Yau structure and therefore it trivially solves the supersymmetry equations. Globally however the geometry is very different and that in turn allows for the existence of a non-trivial flux and warp factor. We could say then that the non-trivial warp-factor and flux \emph{are supported} by the \emph{non-geometry} of the solution. 

\begin{remark}
In the standard compactification scenario, instead of $\varphi$ what we have is the derivative of the warp factor, say $df$, where $f\in C^{\infty}(M_{8})$ denotes the globally defined warp-factor on $M_{8}$. The equation of motion of the warp factor is given, after setting $G$ equals to zero, by:

\begin{equation}
\Delta\, f = 0\, .
\end{equation}

\noindent
Since $M_{8}$ is closed then $f$ must be constant and we obtain the famous Maldacena-Nu\~nez no-go theorem. In our non-geometric case however, we get a harmonic one-form $\varphi$, so as long as the first betti number of $M_{8}$ is bigger or equal than one, we are guaranteed to have at least one non-trivial solution. This is precisely what happens for the solution that we found, and that is why we are able to evade the no-go Maldacena-Nu\~nez theorem.
\end{remark}

\noindent
For completeness, let us write locally the warp factor and flux in local coordinates: let $(U_{a}, z_{a})$ be a local chart of $\mathbb{C}H^{4}_{\alpha}$. Then we have that: 

\begin{equation}
\varphi|_{U_{a}} = d \left(\log z^{t}_{a}\bar{z}_{a} + c^{\prime}_{a}\right)\, , \qquad c^{\prime}_{a}\in\mathbb{R}\, ,
\end{equation}

\noindent
and thus the warp factor of eleven-dimensional Supergravity compactified on $CH^{4}_{\lambda}$ is, at every coordinate chart $(U_{a}, z_{a})$, given by:

\begin{equation}
\Delta_{a} = c_{a} z^{t}_{a}\bar{z}_{a}\, , \qquad c_{a}\in \mathbb{R}^{\ast}\, .
\end{equation}

\noindent
Therefore, locally the four-form flux is given by:

\begin{equation}
\mathsf{G}|_{U_{a}} = \mathrm{Vol}_{1,2}\wedge d(c_{a} z^{t}_{a}\bar{z}_{a})^3\, .
\end{equation}

\noindent
\begin{remark}
In reference \cite{Shahbazi:2015mva}, a particular atlas was used in order to make $\left\{ \mathsf{G}_{a}\right\}_{a\in I}$ a globally defined tensor on $M$. However, from the point of view of a non-geometric compactification, we do not need to perform such a construction. For non-geometric compactifications the global objects that locally correspond to the fields of the theory are not expected to be standard tensors. In this case $\mathsf{G}$ can be understood as a four-form taking values on a real line bundle $L$:

\begin{equation}
\mathsf{G} = \mathrm{Vol}\wedge\xi\, , \qquad \xi\in \Omega^{1}(M_{8};L^3)\, .
\end{equation} 

\noindent
The real line bundle $L$ twists $\mathsf{G}$ from being a standard four-form and this is the result of the non-trivial global patching of the solution.
\end{remark}

\noindent
The solution $Sol = \left(\mathbb{C}H^{4}_{\alpha}, g_{8}, \omega, \Omega, \varphi\right)$ that we have obtained, although the simplest of its kind, has very interesting properties, some of them shared also by more general locally conformally K\"ahler manifolds. In particular, it is equipped with a holomorphic torus fibration and a transversely orientable, codimension one, real foliation with a $G_{2}$-structure on the leaves. Therefore, $Sol$ has the geometric properties found in \cite{Babalic:2014fua,Babalic:2014dea} for the most general $\mathcal{N}=1$ supersymmetric compactification of eleven-dimensional Supergravity to three-dimensional Minkowski space-time. This will be the subject of the next section.


\subsection{Foliations and principal torus fibrations on Vaisman manifolds}
\label{sec:Foltorus}


Let $(M,\omega)$ be a Vaisman manifold, namely $(M,\omega)$ is a locally conformally K\"ahler manifold with a parallel Lee-form $\theta$. Since $\theta$ is parallel, if it is non-zero at one point, it is non-zero at every point. Notice that the Hopf manifold that we found in section \ref{sec:complexHopf} that satisfy the local equations of motion of eleven-dimensional Supergravity is a particular example of Vaisman manifold. A Vaisman manifold $(M,\omega)$ is equipped with four canonical foliations, which are defined on $(M,\omega)$ by means of the Lee-form $\theta$ and the complex structure $J$ of $M$ as follows \cite{foliationslck,lcKbook}:

\begin{itemize}

\item $(M,\omega)$ is equipped with a completely integrable and regular codimension-one distribution $F\subset TM$, given by $\theta = 0$. We will denote by $\mathcal{F}$ the corresponding foliation, which is totally geodesic.

\item $(M,\omega)$ is equipped with a completely integrable and regular dimension one distribution $D\subset TM$ given by the vector field $v = \theta^{\sharp}$. We will denote by $\mathcal{D}$ the corresponding foliation, which is a geodesic foliation.

\item $(M,\omega)$ is equipped with a completely integrable and regular dimension one distribution $D^{\perp}\subset TM$ given by the vector field $w = J\cdot v$. We will denote by $\mathcal{D}^{\perp}$ the corresponding foliation. Notice that the distribution $D^{\perp}$ is perpendicular to $D$, and hence the symbol used.

\item $(M,\omega)$ is equipped with a completely integrable and regular dimension two distribution $T = D^{\perp}\oplus D\subset TM$. We will denote by $\mathcal{T}$ the corresponding foliation. The foliation $\mathcal{T}$ is a complex analytic foliation whose leaves are parallelizable complex analytic manifolds of complex dimension one. The leaves are totally geodesic, locally
Euclidean submanifolds of $M$ and the foliation is Riemannian.

\end{itemize}

\noindent
If the foliation $\mathcal{T}$ is regular, as it happens for the solution $Sol = \left(\mathbb{C}H^{4}_{\alpha}, g_{8}, \omega, \Omega, \varphi\right)$ that we found in section \ref{sec:complexHopf}, then the following result holds \cite{foliationslck,lcKbook}.

\begin{thm}
If the foliation $\mathcal{T}$ on a compact Vaisman manifold $(M,\omega)$ is regular then:

\begin{itemize}

\item The leaves are totally geodesic flat torii.

\item The leaf space $\mathcal{M} = M/\mathcal{F}$ is a compact K\"ahler manifold.

\item The projection $\pi$ is a locally trivial fibre bundle.

\end{itemize}
\end{thm}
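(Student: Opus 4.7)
The plan is to establish the three claims in turn, leveraging that on a Vaisman manifold both the Lee field $v=\theta^{\sharp}$ and its companion $w=Jv$ are parallel Killing vector fields that commute and together span the distribution $T$ tangent to $\mathcal{T}$.

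For the first bullet I would proceed as follows. Since $\theta$ is parallel by the Vaisman hypothesis, so is $v$; a short computation using $\nabla J = 0$ at the level of the Weyl/Levi-Civita connection (which on a Vaisman manifold forces $Jv$ to be parallel with respect to $\nabla$) gives that $w$ is parallel as well. Because the two parallel, mutually orthogonal vector fields $v,w$ of constant norm span $T$, the induced metric on every leaf of $\mathcal{T}$ is flat and the leaf is parallelizable. Regularity of $\mathcal{T}$ combined with compactness of $M$ forces each leaf to be an embedded closed submanifold, so each leaf is a compact, complete, flat, parallelizable Riemannian $2$-manifold, hence a flat torus (either by Bieberbach or directly as a quotient of $\mathbb{R}^{2}$ by a rank-2 lattice). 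Total geodesicity of $\mathcal{T}$ is already recorded in the list preceding the theorem.

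For the second bullet, since $J$ preserves $T$ it descends to an almost complex structure $\tilde{J}$ on $N=M/\mathcal{T}$, integrable because $J$ is. To produce the K\"ahler form I would introduce the transversal two-form
\begin{equation}
\omega_{0}=\omega-\theta\wedge J\theta,
\end{equation}
and check (i) $\iota_{v}\omega_{0}=\iota_{w}\omega_{0}=0$ using the standard Vaisman identities $\iota_{v}\omega=-J\theta$ and $\iota_{w}\omega=\theta$ together with $|\theta|$ constant; (ii) $\mathcal{L}_{v}\omega_{0}=\mathcal{L}_{w}\omega_{0}=0$ from the fact that $v$ and $w$ are holomorphic Killing fields; and (iii) $d\omega_{0}=0$, which follows from $d\omega=\theta\wedge\omega$ together with $\nabla\theta=0$ so that $d\theta=0$ and $d(J\theta)$ precisely cancels the twisting term. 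Consequently $\omega_{0}$ is basic and closed, descending to a closed positive $(1,1)$-form on $N$ which, paired with the transverse Riemannian metric induced by $g$, yields the K\"ahler structure.

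The third bullet is then essentially automatic: once $\mathcal{T}$ is regular, $N$ is a smooth manifold and $\pi$ is a surjective submersion whose fibres are compact, connected and all mutually diffeomorphic (flat torii, by step one). Properness of $\pi$ together with a chosen transverse distribution then delivers local triviality through Ehresmann's fibration theorem. I expect the genuine technical obstacle to lie in step two: verifying that $\omega_{0}$ is simultaneously basic and closed requires the full web of Vaisman identities between $\omega$, $\theta$, $v$ and $w$ to interlock, and a careless choice of normalization for $\omega_{0}$ obscures the needed cancellation.
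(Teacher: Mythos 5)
The paper does not actually prove this theorem: it is imported verbatim from the locally conformally K\"ahler literature (the references \cite{foliationslck,lcKbook}), so there is no in-text argument to compare against. Your proposal reconstructs what is essentially the standard proof of Vaisman's structure theorem --- flatness and compactness of the leaves of $\mathcal{T}$, descent of a basic transversal K\"ahler form $\omega_{0}=\omega-\theta\wedge J\theta$ (which, with $|\theta|$ normalized, is $-d(J\theta)$ and hence automatically closed), and Ehresmann for local triviality --- and you correctly read the leaf space as $M/\mathcal{T}$ rather than the $M/\mathcal{F}$ that appears in the paper's statement, which is a typo since $\mathcal{F}$ has codimension one.

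There is, however, one genuinely false step: on a Vaisman manifold the anti-Lee field $w=Jv$ is \emph{not} parallel for the Levi-Civita connection. Only $\theta$ (hence $v$) is parallel; $J$ is parallel for the Weyl connection of the local K\"ahler metrics, not for $\nabla$, and mixing the two connections as you do does not yield $\nabla w=0$. Concretely, $\nabla_{X}w=(\nabla_{X}J)v$, and the standard l.c.K.\ formula for $\nabla J$ gives $\nabla_{X}w=\tfrac{1}{2}|\theta|^{2}JX$ modulo terms along $v$ and $w$, which is nonzero for $X$ orthogonal to the leaf; $w$ is Killing and holomorphic but not parallel. (If both $v$ and $w$ were parallel, total geodesy of $\mathcal{T}$ would be a triviality and the cited sources would not bother recording it separately.) The conclusion you want still holds, but for a different reason: one checks $[v,w]=\nabla_{v}w-\nabla_{w}v=(\nabla_{v}J)v=0$ and that $g(v,v)$, $g(v,w)$, $g(w,w)$ are constant, so the flows of the commuting fields $v,w$ give coordinates on each leaf in which the induced metric has constant coefficients, hence is flat; orientability (the leaves are complex curves) then rules out the Klein bottle and compactness forces a torus. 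Alternatively, use that $\mathcal{T}$ is totally geodesic (which requires the same computation of $\nabla_{v}w$, $\nabla_{w}w$ being tangent to the leaf) together with the Gauss equation. With that repair, and with care about the sign conventions you already flag in step two, the argument goes through.
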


\noindent
Therefore, compact Vaisman manifolds with regular foliation $\mathcal{T}$ are equipped with a non-trivial torus principal bundle over a K\"ahler manifold, in the line of the suggestion made in \cite{Grana:2014rpa}. This is interesting, because for F-theory applications one needs the eight-dimensional compactification manifold to admit a elliptic fibration, which must be singular to be non-trivial since the compactification space is an irreducible Calabi-Yau four-fold. In our case the fibration can be non-trivial yet non-singular, and that is indeed the case of the solution of section \ref{sec:complexHopf}. The interpretation, if any, of such non-singular and non-trivial fibrations in the context of F-theory remains unclear. This of course does not mean that there are no locally conformally K\"ahler four-folds admitting singular elliptic fibrations; this is currently an open problem\footnote{Private communication from Victor Vuletescu.}.

On the other hand, a compact Vaisman manifold admits a topological $Spin(7)$-structure, and in particular it is spin, as a consequence of having all the Chern numbers equal to zero. This $Spin(7)$-structure induces a $G_{2}$-structure on the leafs of the canonical foliation $\mathcal{F}$. If we restrict to the class of Hopf manifolds inside the class of Vaisman manifolds, then we have a very explicit result about the $G_{2}$-structure present in the leaves. Notice that the solution of section \ref{sec:complexHopf} is a Hopf manifold, so the following result applies \cite{foliationslck,lcKbook}.

\begin{prop}
\label{prop:Hopfspheres}
Let $(M,\omega)$ be a compact $2m$-dimensional Hopf manifold. Then, $\mathcal{F}$ is a totally geodesic foliation of $(2m-1)$-dimensional spheres defined through the diffeomorphism $M\simeq S^{1}\times S^{2m-1}$.
\end{prop}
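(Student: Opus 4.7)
The plan is to work on the universal cover $\tilde M = \mathbb{C}^m\setminus\{0\}$, where the Vaisman structure is the obvious one: the metric $g_0$ and Lee form $\varphi_0 = d\log(z^t\bar z)$ of equation \eqref{eq:g0omega0} (with $m$ in place of $4$). Introducing polar coordinates $z = r u$ with $r = |z|\in\mathbb{R}_{>0}$ and $u\in S^{2m-1}$, we have $\varphi_0 = 2\,dr/r$, so
\[
\ker\varphi_0 = \ker dr \subset T\tilde M ,
\]
which is precisely the distribution tangent to the family of spheres $\Sigma_r:=\{|z|=r\}$. Hence the pullback $\tilde{\mathcal F}$ of $\mathcal F$ to $\tilde M$ is the foliation by round spheres $\Sigma_r$, $r\in\mathbb{R}_{>0}$.

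Next I would descend this foliation to $M$. The deck group $\mathfrak{S}_\alpha\cong\mathbb{Z}$ acts as $(r,u)\mapsto(|\alpha|\,r,\,(\alpha/|\alpha|)u)$, which permutes the leaves $\Sigma_r$ and therefore preserves $\tilde{\mathcal F}$; so $\tilde{\mathcal F}$ descends to a regular foliation on $M$ whose restriction to any leaf $\Sigma_r$ is a covering map. Because $|\alpha|<1$, no nontrivial deck element fixes any $\Sigma_r$ setwise, so this covering is a diffeomorphism and each leaf of $\mathcal F$ is diffeomorphic to $S^{2m-1}$.

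To match with the splitting $M\simeq S^1\times S^{2m-1}$, I would use the $\mathfrak{S}_\alpha$-equivariant diffeomorphism $\tilde M\simeq \mathbb{R}_{>0}\times S^{2m-1}$. Quotienting the $\mathbb{R}_{>0}$ factor by $r\mapsto|\alpha|r$ yields $S^1$, and the leaf space of $\mathcal F$ is precisely this $S^1$; when $\alpha$ is real positive the action on the $S^{2m-1}$ factor is trivial and one gets the product foliation $\{pt\}\times S^{2m-1}$ directly, while in general one obtains a mapping torus that can be trivialized via an isotopy of the rotation $u\mapsto(\alpha/|\alpha|)u$ to the identity inside the connected group $U(m)$, so that under a suitable diffeomorphism $M\simeq S^1\times S^{2m-1}$ the leaves of $\mathcal F$ still correspond to the slices $\{pt\}\times S^{2m-1}$. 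The totally geodesic property is immediate from the fact that $\theta$ is Levi-Civita parallel on a Vaisman manifold: for any $X,Y\in\ker\theta$,
\[
\theta(\nabla_X Y) = X(\theta(Y)) - (\nabla_X\theta)(Y) = 0 ,
\]
as already recorded in the first bullet of the list preceding the proposition.

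The step I expect to be the most delicate is the last identification: producing a diffeomorphism $M\simeq S^1\times S^{2m-1}$ under which the leaves of $\mathcal F$ are \emph{exactly} the horizontal spheres $\{pt\}\times S^{2m-1}$, rather than being leaves of a nontrivial mapping torus. For real positive $\alpha$ this is automatic; for general $\alpha\in\mathbb{C}^*$ with $|\alpha|<1$ it rests on the connectedness of $U(m)$ and the freedom to redefine the trivialization of $\mathbb{R}_{>0}/\langle|\alpha|\rangle\times S^{2m-1}$ along an isotopy of the holonomy rotation to the identity, and this is where some care is needed to keep the foliation structure intact.
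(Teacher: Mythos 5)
Your argument is correct. Note, however, that the paper does not actually prove Proposition \ref{prop:Hopfspheres}: it is quoted verbatim from the locally conformally K\"ahler literature (\cite{foliationslck,lcKbook}), so there is no in-text proof to compare against. Your universal-cover computation is a legitimate self-contained substitute: passing to $\mathbb{C}^m\setminus\{0\}$, reading off $\varphi_0 = 2\,dr/r$ so that $\ker\varphi_0$ is tangent to the spheres $\Sigma_r$, observing that no nontrivial power of $z\mapsto\alpha z$ preserves any $\Sigma_r$ (since $|\alpha|\neq 1$), and then trivializing the resulting mapping torus of the rotation $u\mapsto(\alpha/|\alpha|)u$ via an isotopy to the identity in $U(m)$ is exactly the standard route, and your handling of the one delicate point (that the leaves become the horizontal slices $\{pt\}\times S^{2m-1}$ under a suitable diffeomorphism, not merely leaves of some twisted bundle) is sound. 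One small streamlining worth recording: in the coordinate $t=\log r$ one has $g_0 = dt^2 + g_{S^{2m-1}}$ and $\varphi_0 = 2\,dt$, so the metric is literally a product on $\mathbb{R}\times S^{2m-1}$; this makes the parallelism of the Lee form (the Vaisman condition, which your totally-geodesic computation silently relies on) and the totally geodesic property of the $\{t=\mathrm{const}\}$ slices immediate, without invoking the general identity $\theta(\nabla_X Y) = X(\theta(Y)) - (\nabla_X\theta)(Y)$.
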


\noindent
Let us apply proposition \ref{prop:Hopfspheres} to the $m=8$ case. Then the foliation $\mathcal{F}$ is by seven-dimensional spheres $S^{7}$. But a seven-dimensional sphere $S^{7}$ is equipped with a nearly parallel $G_{2}$-structure $\phi\in\Omega^{3}(S^{7})$, which satisfies:

\begin{equation}
d\phi = \tau_{0}\ast\phi\, , \qquad d\ast\phi = 0 \, , \qquad d\tau_{0} = 0\, .
\end{equation}

\noindent
Let denote by $\tau_{0}\in\Omega^{0}(S^{7}),\tau_{1}\in\Omega^{1}(S^{7}),\tau_{2}\in\Omega^{2}_{14}(S^{7})$ and $\tau_{2}\in\Omega^{3}_{27}(S^{7})$ the torsion classes of the $G_{2}$ structure $\phi$. Then, the $G_{2}$-structure $\phi$ satisfies $\tau_{2} = 0$ and it is therefore the codimension-one foliation of the non-geometric solution that we found is a particular case of the general characterization found in references \cite{Babalic:2014fua,Babalic:2014dea,Babalic:2015xia,Babalic:2015kka} for the most general eleven-dimensional Supergravity supersymmetric compactification background to three-dimensions. It is rewarding to see that although we are considering a non-geometric compactification background, the foliation structure of the most general geometric supersymmetric compactification background is preserved, which also indirectly shows that compactifying in this class of non-geometric compactification background should be possible in principle. 


\phantomsection
\bibliographystyle{JHEP}
\bibliography{C:/Users/cshabazi/Dropbox/Referencias/References}
\label{biblio}
\newpage

\end{document}